\documentclass[conference]{IEEEtran}
\IEEEoverridecommandlockouts

\usepackage{amsmath,amssymb,amsthm,mathtools}
\usepackage{bm}            
\usepackage{graphicx}
\usepackage{booktabs}
\usepackage{array}
\usepackage{comment}
\usepackage[hidelinks]{hyperref}
\usepackage{amsthm}
\usepackage{cite}
\usepackage{color}
\usepackage{xcolor}

\newtheorem{theorem}{Theorem}

\usepackage{algorithm}
\usepackage{algpseudocode}   
\usepackage{bm}
\usepackage{subcaption}

\newcommand{\hh}{\bm{h}}

\makeatletter
\algrenewcommand\alglinenumber[1]{\footnotesize #1}
\makeatother

\algrenewcommand\algorithmicrequire{\textbf{Input:}}
\algrenewcommand\algorithmicensure{\textbf{Output:}}

\allowdisplaybreaks[1]
\newcommand{\E}{\mathbb{E}}
\newcommand{\norm}[1]{\left\lVert #1 \right\rVert}

\newcommand{\tth}{\bm{\theta}}
\newcommand{\tths}{\bm{\theta}^\star}

\newcommand{\tthtp}{\bm{\theta}(t{+}1)}

\newcommand{\vvtp}{\bm{\nu}(t{+}1)}
\newcommand{\Dth}{\Delta\bm{\theta}}

\newcommand{\Dthhat}{\widehat{\Delta\bm{\theta}}}
\newcommand{\ecomp}{\bm{e}_{\mathrm{comp}}}
\newcommand{\ech}{\bm{e}_{\mathrm{channel}}}

\newcommand{\stepsize}{\eta}
\newcommand{\local}{\tau}
\newcommand{\M}{M}
\newcommand{\N}{N}

\begin{document}

\title{Age-Aware Edge-Blind Federated Learning via Over-the-Air Aggregation}

\author{$\hspace{-.25in}$Ahmed M. Elshazly$^{1,2}\qquad$Ahmed Arafa$^1$\\$^1$Department of Electrical and Computer Engineering, University of North Carolina at Charlotte, NC 28223\\
$^2$Electrical Engineering Department, Faculty of Engineering, Alexandria University, Alexandria, Egypt\\
\emph{aelshaz1@charlotte.edu}$\qquad$\emph{aarafa@charlotte.edu}
\thanks{This work was supported by the U.S. National Science Foundation under Grants CNS 21-14537 and ECCS 21-46099.}}

\maketitle

\begin{abstract}
We study federated learning (FL) over wireless fading channels where multiple devices simultaneously send their model updates. We propose an efficient \emph{age-aware edge-blind over-the-air FL} approach that does not require channel state information (CSI) at the devices. Instead, the parameter server (PS) uses multiple antennas and applies maximum-ratio combining (MRC) based on its estimated sum of the channel gains to detect the parameter updates. A key challenge is that the number of orthogonal subcarriers is limited; thus, transmitting many parameters requires multiple Orthogonal Frequency Division Multiplexing (OFDM) symbols, which increases latency. To address this, the PS selects only a small subset of model coordinates each round using \emph{AgeTop-\(k\)}, which first picks the largest-magnitude entries and then chooses the \(k\) coordinates with the longest waiting times since they were last selected. This ensures that all selected parameters fit into a single OFDM symbol, reducing latency. We provide a convergence bound that highlights the advantages of using a higher number of antenna array elements and demonstrates a key trade-off: increasing \(k\) decreases compression error at the cost of increasing the effect of channel noise. Experimental results show that (i) more PS antennas greatly improve accuracy and convergence speed; (ii) AgeTop-\(k\) outperforms random selection under relatively good channel conditions; and (iii) the optimum \(k\) depends on the channel, with smaller \(k\) being better in noisy settings.
\end{abstract}

\section{Introduction}

Federated Learning (FL) \cite{H. B. McMahan 2016} is a distributed training approach to machine learning in which the training process is done at the edge devices and only the local updates are transmitted to a parameter server (PS) that orchestrates the learning process. Hence, FL solves multiple limitations of centralized (classical) learning with respect to efficient transmission, storage and privacy \cite{Federated learning: Strategies, privacy, advances and problems fl}.

FL was first implemented in a sequential manner over digital links, e.g., \cite{H. B. McMahan 2016}. Although this approach is reliable, it has efficiency limitations, especially when the number of devices and model parameters are large \cite{analog-digital, delay sensitive}. To improve efficiency, later works proposed selecting only a subset of devices per round \cite{ABS, value of information, submodular,online scheduling, two-phase deep, CSMAAFL} or transmitting quantized updates to reduce communication overhead \cite{QSGD, UVeQFed, quantization constraints, joint privacy quantization, Hierarchical quantization, DSGD, FedAQ}. Recently, analog over-the-air (OTA) aggregation has been explored, exploiting the wireless multiple access channel (MAC)’s superposition property to allow simultaneous update transmissions that directly aggregate at the PS~\cite{OTA computation, Beamforming and device selection OTA, Beamforming Vector Design, Active RIS}.  

For accurate aggregation of the OTA local updates, each device must perfectly estimate its instantaneous channel state information (CSI) and apply phase correction and channel inversion before transmission \cite{Knowledge-guided OTA, Age-based device selection, on analog, Broadband analog}. As a result, the local updates are received with almost equal power and constructively aggregate the PS. However, channel estimation is computationally intensive, increasing communication overhead, power consumption, and delay. To overcome these challenges, several approaches have been proposed to eliminate channel estimation at the device/client side \cite{NCAirFL,Revisiting analog,Random Orthogonalization, Distributed SGD}. Although the OTA approach is efficient, the transmission latency increases significantly with the number of model parameters. This has been mitigated in the literature by multiple approaches, such as compressive sensing (CS) methods \cite{CS 1, CS 2}, random parameter selection \cite{rtopk}, and intelligent index-selection strategies, including layer-wise sparsification, gradient segmentation, and age-based selection \cite{Layer, SegOTA}.

The most closely-related works to ours are \cite{blind} and \cite{age aware}. In \cite{blind}, the authors propose a blind FL approach using a multiple-antenna PS, in which the entire model updates are transmitted using multiple OFDM symbols. In their approach, the potential advantages of model parameter selection to reduce latency and noise effects has not been investigated. On the other hand, the authors in \cite{age aware} propose an age-aware OTA algorithm for a single-antenna PS, in which the client gradients are compressed to a smaller dimension and sent via orthogonal waveforms over a simplified flat-fading channel with no power constraints. While this represents a simpler channel model, their main objective was to highlight the advantage of model parameter selection using \emph{AgeTop-\(k\)}, which first picks the indices corresponding to the largest magnitude entries and then chooses from them the $k$ coordinates with the longest waiting times since they were last selected.

In this paper, we propose an efficient edge-blind OTA FL approach in which no CSI estimation is required at the client devices. Instead, the PS is equipped with multiple antennas and applies maximum ratio combining (MRC) based on its estimated sum of the channel gains. Our framework is integrated with the AgeTop-$k$ index selection approach to improve communication efficiency by transmitting a single OFDM symbol over a realistic frequency selective channel with power constraints. Moreover, we investigate the impact of the number of parameters selected for transmission on the system performance. Our convergence analysis and experimental results demonstrate a trade-off between transmission quality and compression effectiveness with respect to the number of selected parameters $k$: larger values of $k$ minimize compression errors, yet maximize the effect of noise. We show there exists an optimal $k$ that strikes a balance in between.

\section{System Model}

We describe the system model by first presenting the FL process, followed by the communication model and the compression technique used for efficient transmission.

\subsection{Federated Learning Model}

The FL system consists of $M$ client devices and a PS that aggregates the model updates over a wireless channel. The objective is to optimize the model parameters $\bm{\theta} \in \mathbb{R}^d$ that minimize the global loss $F(\bm{\theta})$. That is, to solve for
\begin{equation}
\bm{\theta}^\star \triangleq \arg\min_{\bm{\theta}} F(\bm{\theta}).
\label{eq:theta_star}
\end{equation}
The global loss is computed as the weighted sum of each client’s average empirical loss, with weights given by the ratio of the client’s dataset size to the total dataset size:
\begin{equation}
F(\bm{\theta}) = \sum_{m=1}^M \frac{B_m}{B} F_m(\bm{\theta}). \label{eq:F_theta}
\end{equation}
Here, the local dataset of device $m$ is denoted by $\mathcal{B}_m$ with $B_m \triangleq |\mathcal{B}_m|$, and the total dataset size is $B \triangleq \sum_{m=1}^M B_m$. The average empirical loss at device $m$ is
\begin{equation}
F_m(\bm{\theta}) = \frac{1}{B_m} \sum_{\bm{b} \in \mathcal{B}_m} f(\bm{\theta}, \bm{b}), \quad m \in [M], \label{eq:Fm_theta}
\end{equation}
where $f(\bm{\theta}, \bm{b})$ denotes the loss associated with sample $\bm{b}$ for model parameters $\bm{\theta}$. During global iteration $t$, after receiving $\bm{\theta}(t)$ from the PS, device $m$ performs $\tau$ local stochastic gradient descent (SGD) steps:
\begin{equation}
\bm{\theta}^{i+1}_m(t) = \bm{\theta}^i_m(t) - \eta(t)\, \nabla F_m\!\left( \bm{\theta}^i_m(t), \xi^i_m(t) \right), \quad i \in [\tau], \label{eq:local_update}
\end{equation}
where $\eta(t)$ represents the learning rate and $\xi^i_m(t)$ denotes the mini-batch. Each device sends $\Delta\bm{\theta}_m(t) = \bm{\theta}^{\tau+1}_m(t) - \bm{\theta}(t)$ after completing its local updates. Ideally, when accurate local updates are available, the PS updates the global model as
\begin{equation}
\bm{\theta}(t+1) = \bm{\theta}(t) + \Delta\bm{\theta}(t),
\end{equation}
where the aggregated update $\Delta\bm{\theta}(t)$ is the average given by
\begin{equation}
\Delta\bm{\theta}(t) \triangleq \frac{1}{M} \sum_{m=1}^M \Delta\bm{\theta}_m(t). \label{eq:global_update}
\end{equation}

\subsection{Age-Aware Over-the-Air Communication Model}

Practically, the local model updates of the clients are transmitted to the PS through a wireless channel and are aggregated over-the-air (OTA). Each client device is equipped with a single antenna, while the PS is equipped with an $N$-element antenna array for reception. The channel is modeled as a frequency selective fading channel, and OFDM is employed to transmit the local updates of each client over $s$ orthogonal subcarriers. We consider the case in which the number of orthogonal subcarriers is limited \( (s \ll d) \), thereby transmitting the entire local update of each device would incur high latency. Therefore, a compression step is implemented in which the local update vector of each client is compressed to dimension $k$, with \( k \ll d \), using an {\it age-aware} selection algorithm, under the assumption that  $s=k/2$.

Specifically, in round $t$, the PS maintains two state variables:  
$(i)$ a global update buffer $\Dth_{\mathrm{global}}(t)\in\mathbb{R}^{d}$ that holds the most recent estimate of each coordinate, and  
$(ii)$ an age-of-information (AoI) vector $\bm{a}(t) \in \mathbb{N}^{d}$ that tracks the staleness of each coordinate, defined as the number of communication rounds since it was last updated. Using $\Dth_{\mathrm{global}}(t)$, the server first identifies the indices of the $r$ largest-magnitude entries:
\begin{equation}
\mathcal{S}_r(t)\ \triangleq\ 
\operatorname{arg\,Top}\text{-}r\Big(\,\big|\Dth_{\mathrm{global}}(t)\big|\,\Big).
\label{eq:topr}
\end{equation}
Within this candidate set, the server then applies AgeTop-$k$ selection by choosing the $k$ indices with the largest ages:
\begin{equation}
\mathcal{S}_k(t)\ \triangleq\ 
\operatorname{arg\,Top}\text{-}k\Big(\,\{\,a_j(t) : j\in \mathcal{S}_r(t)\,\}\Big).
\label{eq:topk_aoi}
\end{equation}
Thus, coordinates are selected if they are both significant in magnitude and sufficiently stale.  

Let $\bm{S}(t) \in \{0,1\}^{d \times d}$, which is broadcast together with the current model $\tth(t)$, be a diagonal matrix such that $(\bm{S}(t))_{j,j} = 1$ if $j \in \mathcal{S}_k(t)$. Removing the $d-k$ all-zero rows yields the reduced selector $\hat{\bm{S}}(t) \in \{0,1\}^{k \times d}$. Each client $m$ then computes its local update $\Delta \tth_m(t)$ and compresses it to

\begin{equation}
\bm{u}_m(t) \triangleq \hat{\bm{S}}(t)\,\Dth_m(t)\in\mathbb{R}^{k}.
\end{equation}

For an even $k$ (if $k$ is odd, it is incremented by $1$ to make it even), 
we define $s = k/2$ and split $\bm{u}_m(t)$ into 
$\bm{u}_{m,\mathrm{re}}(t), \bm{u}_{m,\mathrm{im}}(t) \in \mathbb{R}^{s}$. These are modulated into a single OFDM symbol with $s$ orthogonal subcarriers as
\begin{equation}
\bm{x}_m(t) \triangleq 
\alpha_t\big(\bm{u}_{m,\mathrm{re}}(t)+\mathrm{j}\,\bm{u}_{m,\mathrm{im}}(t)\big)\in\mathbb{C}^{s},
\label{eq:xm_def}
\end{equation}
with $\alpha_t$ chosen to satisfy the per-round transmit power constraint $\bar{P}$:
\begin{equation}
    \bigl\lVert \alpha_t \, \bm{u}_m(t) \bigr\rVert_2^{2} \;\le\; \bar{P}, 
    \quad \forall \, m, t .
\label{eq:per_round_power}
\end{equation}
At antenna $n\in[\N]$, the received vector is
\begin{equation}
\bm{y}_{n}(t) 
= \sum_{m=1}^{\M}\hh_{m,n}(t)\odot \bm{x}_m(t)+\bm{z}_{n}(t),
\label{eq:ant_rx}
\end{equation}
where $\hh_{m,n}(t)\!\in\!\mathbb{C}^{s}$ is the channel gain vector, with each element $h_{m,n,i}(t)$ distributed as $\mathcal{CN}(0,\sigma_h^2)$ for $i \in [s]$, and $\bm{z}_{n}(t)\!\in\!\mathbb{C}^{s}$ is additive white Gaussian noise, with each element $z_{n,i}(t)$ distributed as $\mathcal{CN}(0,\sigma_z^2)$ for $i \in [s]$. 

The PS is assumed to have perfect CSI of the sum of the channel,
i.e., exact knowledge of $\left(\sum_{m=1}^{M} \hh_{m,n}(t)\right)$.\footnote{This has been a reasonable assumption in the literature, e.g., \cite{Random Orthogonalization, blind}.} 
With this, the normalized MRC output is
\begin{equation}
\bm{y}(t) 
= \frac{1}{\alpha_t\,\M\,\sigma_h^{2}\,\N}
\sum_{n=1}^{\N}\Big(\sum_{m=1}^{\M}\hh_{m,n}(t)\Big)^{\!*}\odot \bm{y}_{n}(t).
\label{eq:mrc}
\end{equation}
As the number of antennas $N$ increases, the performance of the MRC receiver improves as interference and noise terms diminish with $N$, and the best performance is achieved when $N \to \infty$~\cite{blind}. In practice, a sufficiently large but finite number of antennas yields an accurate approximation of the true average of the local updates
\begin{equation}
\bm{y}(t)\ \approx\ \frac{1}{\M}\sum_{m=1}^{\M}
\big(\bm{u}_{m,\mathrm{re}}(t)+\mathrm{j}\,\bm{u}_{m,\mathrm{im}}(t)\big).
\label{eq:y_avg}
\end{equation}
The PS reconstructs the $k$ compressed entries as
\begin{equation}
\widehat{\bm{u}}(t) = 
\big[\Re\{\bm{y}(t)\};\,\Im\{\bm{y}(t)\}\big]\in\mathbb{R}^{k},
\label{eq:u_hat}
\end{equation}
and decompresses that to the $d$-dimensional space:
\begin{equation}
\widehat{\Dth}(t)=(\hat{\bm{S}}(t))^{\!\top}\widehat{\bm{u}}(t).
\label{eq:theta_hat}
\end{equation}
The reconstructed and decompressed update  $\widehat{\Dth}(t)$ is then used to update the global model as
\begin{equation}
 \tth(t{+}1) = \tth(t) + \widehat{\Dth}(t).
\label{eq:update rule}
\end{equation}
Finally, the global update buffer is refreshed as
\begin{equation}
\Dth_{\mathrm{global},j}(t{+}1) \triangleq
\begin{cases}
\widehat{\Dth}_j(t), & j\in\mathcal{S}_k(t),\\[0.3em]
\Dth_{\mathrm{global},j}(t), & j\notin\mathcal{S}_k(t),
\end{cases}
\label{eq:delta_global_update}
\end{equation}
while the AoI counters evolve as
\begin{equation}
a_j(t{+}1) \triangleq
\begin{cases}
0, & j\in\mathcal{S}_k(t),\\[0.3em]
a_j(t)+1, & j\notin\mathcal{S}_k(t).
\end{cases}
\label{eq:age_update}
\end{equation}
Hence, selected coordinates are refreshed with the newly aggregated updates, while non-selected coordinates keep their previous values and accumulate age. This approach maintains the balance between magnitude importance and staleness across iterations. We denote our proposed approach as {\it age-aware edge-blind OTA FL}, and summarize it in Algorithm~\ref{alg:ageblind}.

\begin{algorithm}[]
\caption{Age-Aware Edge-Blind OTA FL}
\label{alg:ageblind}
\small
\begin{algorithmic}[1]  
\Require Initial model $\tth(0)$; rounds $T$; Top-$r$ (magnitude); Top-$k$ (AoI); antennas $\N$
\State Initialize AoI $a[1{:}d]\gets \mathbf{0}$ 
\For{$t=0$ to $T-1$}
    \State $\mathcal{S}_r(t) \gets \operatorname{arg\,Top}\text{-}r\!\left(|\Dth_{\mathrm{global}}(t)|\right)$
    \State $\mathcal{S}_k(t) \gets \operatorname{arg\,Top}\text{-}k\text{-AoI}\!\left(\mathcal{S}_r(t);\, a(t)\right)$
    \State Construct diagonal $\bm{S}(t)$ s.t.\ $S_{j,j}=1$ if $j\!\in\!\mathcal{S}_k(t)$
    \State Remove zero rows of $\bm{S}(t)$ to get reduced selector $\hat{\bm{S}}(t)$
    \State Broadcast $(\tth(t),\, \hat{\bm{S}}(t))$ to all clients
    \ForAll{$m \in \{1,\dots,\M\}$}
        \State $\tau$ local SGD steps using~(\ref{eq:local_update})
        \State$\Delta\bm{\theta}_m(t) = \bm{\theta}^{\tau+1}_m(t) - \bm{\theta}(t)$
        \State Compute and transmit $\bm{x}_m(t)$ using~(\ref{eq:xm_def})
    \EndFor
    \State Apply MRC using~(\ref{eq:ant_rx}) and~(\ref{eq:mrc}) 
    \State Reconstruct and decompress the updates using~(\ref{eq:u_hat}) and~(\ref{eq:theta_hat})
    \State Update the global model using~(\ref{eq:update rule})
    \State Update AoI using (19)    
    \State Refresh the  global buffer $\Dth_{\mathrm{global}}(t)$ using~(\ref{eq:delta_global_update})
\EndFor
\end{algorithmic}
\normalsize
\end{algorithm}

\section{Convergence Analysis}

In this section, we provide a convergence analysis of the proposed age-aware edge-blind OTA FL technique.

\subsection{Preliminaries}

We make the following assumptions and definitions.

\noindent\textbf{Assumption 1:} 
Loss function $F_m$ is $L$-smooth $\forall\, m \in [M]$; i.e.,
$\forall\, \bm{v}, \bm{w} \in \mathbb{R}^d$,
\begin{equation}
    F_m(\bm{v}) - F_m(\bm{w}) 
    \le \langle \bm{v} - \bm{w}, \nabla F_m(\bm{w}) \rangle 
    + \frac{L}{2} \| \bm{v} - \bm{w} \|_2^2.
\end{equation}

\noindent\textbf{Assumption 2:} 
Loss function $F_m$ is $\mu$-strongly convex $\forall\, m \in [M]$; i.e., 
$\forall\, \bm{v}, \bm{w} \in \mathbb{R}^d$,
\begin{equation}
    F_m(\bm{v}) - F_m(\bm{w}) 
    \ge \langle \bm{v} - \bm{w}, \nabla F_m(\bm{w}) \rangle 
    + \frac{\mu}{2} \| \bm{v} - \bm{w} \|_2^2.
\end{equation}

\noindent\textbf{Assumption 3:} 
The mean squared $\ell_2$-norms of the gradients are bounded above by $G^2$; i.e., 
$\forall\, i \in [\tau]$, $\forall\, m \in [M]$, $\forall\, t$,
\begin{equation}
    \mathbb{E} 
    \left[ \left\| \nabla F_m\!\left( \bm{\theta}_m^{i}(t) \right) \right\|_2^2 \right] 
    \le G^2.
\end{equation}

\noindent\textbf{Assumption 4:} 
The variance of the stochastic gradient is uniformly bounded above by $\sigma^2$; i.e., 
for all $m \in [M]$, $\bm{\theta} \in \mathbb{R}^d$, and data samples $\xi$,
\begin{equation}
    \mathbb{E}\!\left[
        \left\| 
            \nabla F_m\!\big(\tth_m^{\,i}(t),\, \xi_m^{i}(t)\big)
            - \nabla F_m\!\left( \bm{\theta}_m^{i}(t) \right) 
        \right\|_2^2
    \right] 
    \le \sigma^2 .
\end{equation}

\noindent\textbf{Assumption 5:}  
For $\big|\Dth_{\mathrm{global}}(t)\big|$, the ratio of the largest magnitude value to the $r$-th largest magnitude value is bounded above by $\beta$ for all iterations $t$.
\vspace{0.5em}

\noindent\textbf{Definition:}  
We define the {\it bias in data distribution} as 
\begin{equation}
    \Gamma \triangleq F^{\ast} - \sum_{m=1}^{M} \frac{B_{m}}{B} \, F_{m}^{\ast},
\end{equation}
where $F^{\ast}$ and $F_{m}^{\ast}$ represent the minimum values of the global loss function and the $m$-th local loss function, respectively.

\subsection{Convergence Rate}

We now present the convergence upper bound of the proposed approach. Following~\cite{ragek}, the algorithm termed \emph{rAge-$k$} can be viewed as a compression operator $\mathrm{comp}_{k}(\cdot)$ with
\begin{equation}
    \gamma = \frac{k}{k + (r - k)\beta + (d - r)},
    \label{eq:gamma_def}
\end{equation}
where $\gamma$ is a constant satisfying
\begin{equation}
    \mathbb{E} \left[ \left\| \bm{\theta} - \mathrm{comp}_{k}(\bm{\theta}) \right\|_2^2 \right]
    \le (1 - \gamma) \, \left\| \bm{\theta} \right\|_2^2.
    \label{eq:compression_def}
\end{equation}
The next theorem specifies the convergence bound. 

\begin{theorem}\label{thm:main}
Let $0 < \eta(t) \le \min\!\left\{ 1, \frac{1}{\mu \tau} \right\}, \; \forall t$. Under Assumptions~1--5, we have
\begin{align}
    \mathbb{E} \left[ F\!\left( \bm{\theta}(T) \right) \right] - F^{\ast}
    &\le \frac{L}{2} \, \mathbb{E} \left[ \left\| \bm{\theta}(T) - \bm{\theta}^{\ast} \right\|_2^2 \right] \\
    &\le \frac{L}{2} 
    \left( \prod_{i=0}^{T-1} D(i) \right) 
    \left\| \bm{\theta}(0) - \bm{\theta}^{\ast} \right\|_2^2 \notag \\
    &\quad + \frac{L}{2} 
    \sum_{j=0}^{T-1} Q(j) 
    \prod_{i=j+1}^{T-1} D(i).
\end{align}
The terms \( D(i) \) and \( Q(i) \) are defined as follows: \begin{IEEEeqnarray}{rCl}
D(i) &=& 2\,A(i) 
 \;=\; 2\Big[1-\mu\,\eta(i)\big(\tau-\eta(i)\,(\tau-1)\big)\Big], 
 \label{eq:D_def}\\[1ex]
Q(i) &=& 2(1-\gamma)\,\eta^{2}(i)\tau^{2}\big(G^{2}+\sigma^{2}\big)
  + \frac{\eta^{2}(i)\tau^{2}G^{2}}{N} \nonumber\\
&& {}+ 2\big(1+\mu(1-\eta(i))\big)\eta^{2}(i)G^{2}
      \frac{\tau(\tau-1)(2\tau-1)}{6} \nonumber\\
&& {}+ 2\,\eta^{2}(i)\!\left( (\tau^{2}+\tau-1)G^{2}
                           + 4\,\eta(i)(\tau-1)\Gamma \right) \nonumber\\
&& {}+ \frac{\sigma_{z}^{2}k}{2\,\alpha_{i}^{2}\,N\,M\,\sigma_{h}^{2}}.
\label{eq:Q_def}
\end{IEEEeqnarray}
\end{theorem}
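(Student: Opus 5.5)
The first inequality follows from $L$-smoothness (Assumption~1) of $F$ expanded around $\tths$, where $\nabla F(\tths)=0$. For the second inequality, the plan is to derive a one-step recursion
\[
\E\|\tth(t{+}1)-\tths\|_2^2 \le D(t)\,\E\|\tth(t)-\tths\|_2^2 + Q(t),
\]
and then unroll it from $t=T-1$ down to $0$. Unrolling immediately gives the product/sum form stated in Theorem~\ref{thm:main}.

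To obtain the recursion, write $\widehat{\Dth}(t) = \Dth(t) + \ecomp(t) + \ech(t)$. Here $\ecomp(t) = \bm{S}(t)\Dth(t)-\Dth(t)$ is the sparsification error, and $\ech(t)=\widehat{\Dth}(t)-\bm{S}(t)\Dth(t)$ is the residual of the MRC output~(\ref{eq:mrc}) relative to the ideal average~(\ref{eq:y_avg}). Then
\[
\tth(t{+}1)-\tths = \big(\tth(t)+\Dth(t)-\tths\big) + \ecomp(t) + \ech(t).
\]
I would group the first two terms and apply $\|a+b\|^2\le 2\|a\|^2+2\|b\|^2$. This is the source of the factor $2$ in $D(i)=2A(i)$ and of the factor $2(1-\gamma)$ on the compression term. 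The channel term is kept separate, because conditioned on the past and on the updates it has zero mean, so its cross terms vanish.

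The noiseless term $\E\|\tth(t)+\Dth(t)-\tths\|^2$ is handled by the standard local-SGD analysis. Expand $\Dth(t)=-\frac{\eta}{M}\sum_m\sum_i \nabla F_m(\tth_m^i,\xi_m^i)$. Use strong convexity (Assumption~2) to get the contraction $1-\mu\eta(\tau-\eta(\tau-1))$, with the constraint $\eta\le\min\{1,1/(\mu\tau)\}$ ensuring positivity. Bound the local drift $\E\|\tth_m^i-\tth(t)\|^2\le \eta^2 i^2 G^2$ using Assumption~3; summing over $i$ produces the $\tau(\tau-1)(2\tau-1)/6$ factor. Then use Assumption~4 and the definition of $\Gamma$ to produce the terms $2\eta^2((\tau^2+\tau-1)G^2+4\eta(\tau-1)\Gamma)$. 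This mirrors existing FedAvg/OTA analyses (e.g.\ \cite{blind}), and I would essentially import it.

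For the compression term, I would apply~(\ref{eq:compression_def}) with $\gamma$ from~(\ref{eq:gamma_def}) to get $(1-\gamma)\E\|\Dth(t)\|^2$. The quantity $\E\|\Dth(t)\|^2$ is then bounded by $\eta^2\tau^2(G^2+\sigma^2)$ via Jensen and Assumptions~3--4.

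The main obstacle is the channel term. Substituting~(\ref{eq:ant_rx}) into~(\ref{eq:mrc}) gives
\[
\frac{1}{\alpha M\sigma_h^2N}\sum_n \Big(\sum_{m'}\hh_{m',n}\Big)^*\odot\Big(\sum_m \hh_{m,n}\odot\alpha\bm{u}_m + \bm{z}_n\Big),
\]
which splits into a signal-plus-interference part and a noise part. For the signal part, I would use independence across antennas and users together with $\E|h|^2=\sigma_h^2$ and fourth moments of complex Gaussians. This yields a self-interference variance of order $\frac{1}{N}\|\frac1M\sum_m\bm{u}_m\|^2$-type terms, which I bound by $\eta^2\tau^2G^2/N$. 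The mean must correctly normalize to the average; I expect the normalization $M\sigma_h^2 N$ in~(\ref{eq:mrc}) was chosen to make this work, possibly after absorbing cross-user terms. For the noise part, each of the $s=k/2$ complex entries has variance $\frac{N M\sigma_h^2\sigma_z^2}{\alpha^2M^2\sigma_h^4N^2}$. Summing over $s$ entries gives $\frac{\sigma_z^2 k}{2\alpha^2 NM\sigma_h^2}$, matching $Q$. Collecting all pieces yields $D(t)$ and $Q(t)$ as defined in~(\ref{eq:D_def})--(\ref{eq:Q_def}).
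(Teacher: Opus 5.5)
Your proposal is correct and is essentially the paper's argument: the paper expands $\|\tth(t{+}1)-\tths\|^2$ into $A+B+C$ around $\vvtp=\tth(t)+\Dth(t)$, drops the channel cross terms by conditional zero mean, and bounds the compression cross term by Young's inequality, which gives exactly your $2\E\|\vvtp-\tths\|^2+2\E\|\ecomp(t)\|^2+\E\|\ech(t)\|^2$; it then imports $A(t),B(t)$ from \cite[Lemma~2]{blind} and unrolls the recursion as you do. The one place you go beyond the paper is computing the MRC error directly rather than citing \cite[Lemma~1]{blind}, and there two details need fixing: the mean is already exactly $\frac{1}{M}\sum_m(\bm{u}_{m,\mathrm{re}}(t)+\mathrm{j}\,\bm{u}_{m,\mathrm{im}}(t))$ with nothing to absorb (cross-user products have zero mean), and the interference variance is $\frac{1}{MN}\sum_m\|\bm{u}_m(t)\|^2$, an average of squared norms rather than the squared norm of the average, which under Assumptions~3--4 only yields $\eta^2\tau^2(G^2+\sigma^2)/N$, so the $G^2$ in $Q$ actually rests on the bounded-stochastic-gradient assumption of \cite{blind}, a looseness the paper's own citation shares.
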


\begin{proof}[Proof Sketch]

We first define $\vvtp \triangleq \tth(t)+\Dth(t)$. Using \eqref{eq:update rule}, we write
\begin{align}
\|\tthtp-&\tths\|^{2}
= \underbrace{\|\tthtp-\vvtp\|^{2}}_{A}
 + \underbrace{\|\vvtp-\tths\|^{2}}_{B} \nonumber \\
& + \underbrace{2\,\big\langle \tthtp-\vvtp,\, \vvtp-\tths \big\rangle}_{C}.
\end{align}
Towards bounding the terms $A$, $B$ and $C$, we express 
\begin{align}
\Dthhat(t)=\Dth(t)+\ecomp(t)+\ech(t),
\end{align}
where $\ecomp(t)=\mathrm{comp}_k(\Dth(t))-\Dth(t)$ denotes the compression error, and $\ech(t)$ denotes the channel error.

Now let us consider the stochastic gradient $\nabla F_m\!\left(\tth_m^{\,i}(t),\,\xi_m^{i}(t)\right)$ at client $m$. We assume this is an unbiased estimate of the true gradient. Using Assumptions~3 and~4, one can show that

\begin{equation}
\E\!\left[
    \left\|
        \nabla F_m\!\left(\tth_m^{\,i}(t),\,\xi_m^{i}(t)\right)
    \right\|^2
\right]
\;\le\; G^2 + \sigma^2.
\label{eq:gmibound}
\end{equation}
Using this, one can apply some algebraic manipulations to further write
\begin{IEEEeqnarray}{rCl}
\E\!\left[\norm{\Dth(t)}^{2}\right]
&\le& \stepsize^{2}\,\local^{2}\,(G^{2}+\sigma^{2}). 
\label{eq:avg-delta-bound}
\end{IEEEeqnarray}
Thus, one can bound the compression error as
\begin{align}
\E\!\left[\norm{\ecomp(t)}^{2}\right]
&\le (1-\gamma)\,\stepsize^{2}\,\local^{2}\,(G^{2}+\sigma^{2}).
\label{eq:comp-bound}    
\end{align}
By \cite[Lemma~1]{blind}, one can show that
\begin{align} \label{eq:channel-error}
\E\norm{\ech(t)}^{2}
&\le \frac{\stepsize^{2}(t)\local^{2}G^{2}}{\N}
   +\frac{\sigma_{z}^{2}k}{2\,\alpha_{t}^{2}\,N\,M\,\sigma_{h}^{2}}.
\end{align}

Next, we show that the term $A$ satisfies the following:
\begin{equation}
\E[A] \;=\; \E\norm{\ecomp(t)}^{2}+\E\norm{\ech(t)}^{2},
\label{eq:EA-sum}
\end{equation}
which is readily bounded using \eqref{eq:comp-bound} and \eqref{eq:channel-error}. Similarly, using some extensive manipulations, and by \cite[Lemma~2]{blind}, we show that the term $C$ satisfies
\begin{equation}
\E[C]\;\le\; \E\norm{\ecomp(t)}^{2}
+ A(t)\,\E\norm{\tth(t)-\tths}^{2}
+ B(t),
\end{equation}
where
\begin{IEEEeqnarray}{rCl}
A(t)&=&\Big[1-\mu\,\stepsize(t)\big(\local-\eta(t)(\local-1)\big)\Big],\\
B(t)&=&\big(1+\mu(1-\stepsize(t))\big)\stepsize^{2}(t)G^{2}\frac{\local(\local-1)(2\local-1)}{6} \nonumber\\
&& {}+\stepsize^{2}(t)\!\left((\local^{2}+\local-1)G^{2}+2\,\stepsize(t)(\local-1)\Gamma\right).
\end{IEEEeqnarray}
Finally, the term $B$ is shown to satisfy
\begin{align}
    \E[B]\le A(t)\,\E\norm{\tth(t)-\tths}^{2}+B(t).
\end{align}
Combining the bounds, expanding and rearranging using the definitions of $D(i)$ and $Q(i)$ in the theorem provides the result.
\end{proof}

\noindent\textbf{Remark 1:}\label{remark:2}
Compared with the convergence upper bound in~\cite{blind}, the bound in Theorem~\ref{thm:main} contains the additional term
\(
2(1-\gamma)\,\eta^{2}(i)\,\tau^{2}\bigl(G^{2}+\sigma^{2}\bigr),
\)
which arises from the compression error. This term decreases as \(k\) increases, as expected, since $\gamma$ increases with $k$. In contrast, the channel noise contribution term 
\(
{\sigma_{z}^{2}\,k}/{2\,\alpha_{i}^{2}\,N\,M\,\sigma_{h}^{2}}
\)
increases (linearly) in \(k\). Hence, there exists an optimal value of the dimension of the compressed model update $k$ that balances compression errors and channel noise effects. We elaborate on that in the next section.

\begin{figure}[t]
    \centering
    \includegraphics[width=0.5\textwidth]{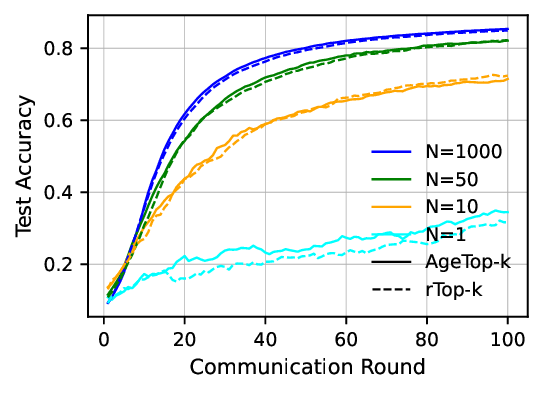}
    \caption{Performance with different number of antennas $N$.}
    \label{fig:N}
\end{figure}

\section{Experimental Results and Discussion}

In this section, we conduct several experiments to further analyze the proposed age-aware edge-blind OTA FL technique. In our experiments, we use the MNIST dataset and perform training using logistic regression implemented as a fully connected single-layer neural network consisting of 10 nodes, resulting in a total of 7,850 parameters. The SGD optimizer is utilized with a fixed learning rate $\eta=0.001$ and a batch size of 32. The global model is updated over $T=100$ communication rounds. To satisfy the power constraint, the power scaling parameter $\alpha_t$ is chosen based on the client with the maximum update norm as follows:
\begin{equation}
    \alpha_t \;=\; 
    \sqrt{ \frac{\bar{P}}{\displaystyle \max_{m} \bigl\lVert \bm{u}_m(t) \bigr\rVert_2^{2}} } \, .
\label{eq:alpha_definition}
\end{equation}
This can be achieved in practice, e.g., by having all clients send their update norms over a control channel, and the PS selecting the value of $\alpha_t$ based on the maximum among them. All experiments are conducted $5$ times and the average performance is reported. We compare our algorithm's performance against \texttt{rTop-k}~\cite{rtopk}, in which after identifying the $r$ largest parameter values in magnitude, the PS randomly selects $k$ of them to request from clients. In the figures, we denote our algorithm as \texttt{AgeTop-k}.

We first investigate the effect of the number of antennas $N$. We set $M=10$ clients and consider an extreme non-IID scenario in which   each client is given training samples of only one MNIST data label. We set $\tau=3$ local SGD iterations to train and update the model. For compression, with a slight abuse of notation, the value of $r$ is set to $0.9$ of the total number of parameters (i.e., $r$ here represents a ratio); likewise, we set the value of $k$ to $0.2$ of the total number of parameters. The wireless channel parameters are set to $\sigma_h^2=1$ and $\sigma_z^2=5$, while each client device is equipped with a power budget of $\bar{P}=10$. The results in Fig.~\ref{fig:N} show that using only a single antenna is insufficient for the system to converge to an acceptable test accuracy. This is because the interference and noise effects of the received signals cannot be mitigated with only one antenna, leading to a significant degradation in training performance. In contrast, increasing the number of antenna elements significantly improves convergence speed and test accuracy, since having more antennas makes the reconstructed received signal at the PS closer to the true average of the client updates. The results in Fig. \ref{fig:N} also show that \texttt{AgeTop-k} achieves better convergence rate and accuracy than \texttt{rTop-k} by using relatively larger number of antennas ($N=50$ and $N=1000$), since this helps mitigate the randomness introduced by the channel, and represents relatively good channel conditions. However, when the number of the receiving antenna elements is relatively small ($N=10$), the two techniques achieve a similar performance, since the randomness introduced by the channel cannot be mitigated using only $10$ antenna elements. 

\begin{figure}[t]
    \centering
    \includegraphics[width=0.5\textwidth]{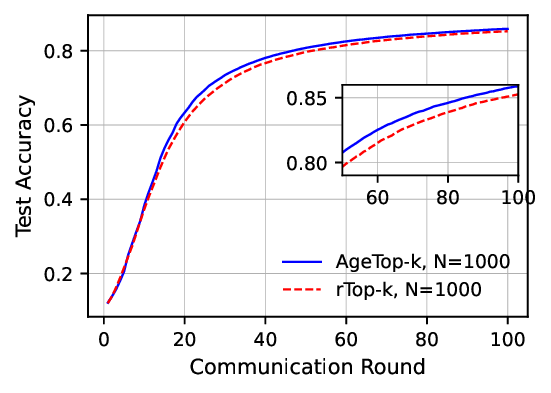}
    \caption{\texttt{AgeTop-k} versus \texttt{rTop-k} under relatively good channel conditions.}
    \label{fig:age_random}
    \vspace{-.2in}
\end{figure}

We further emphasize the advantage of \texttt{AgeTop-k} over the baseline in another experiment conducted by ensuring that the global update vector is constructed after facing {\it good} channel conditions to limit the randomness resulting from the channel. Here, the value of $\bar{P}$ is increased to $10$ and the value of $\sigma_z^2$ is decreased to $0.1$. We set $N=1000$ antennas. The results in Fig.~\ref{fig:age_random} demonstrate that the smarter way of selecting the transmitted indices under \texttt{AgeTop-k} boosts the convergence rate and the test accuracy of the system. 

\begin{figure}[t]
    \centering
    \begin{subfigure}{0.45\textwidth}
        \centering
        \includegraphics[width=\linewidth]{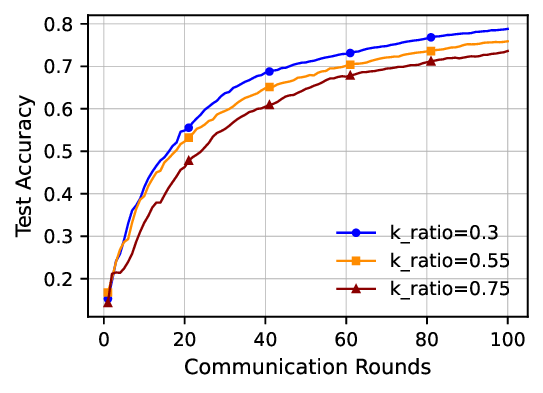}
        \caption{Severe channel conditions.}
        \label{fig:k_a}
    \end{subfigure}
    \hfill
    \begin{subfigure}{0.45\textwidth}
        \centering
        \includegraphics[width=\linewidth]{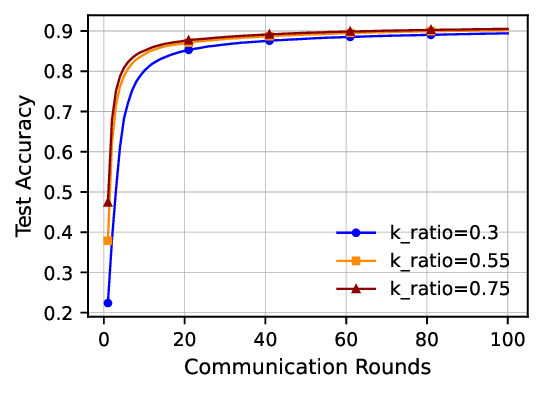}
        \caption{Good channel conditions.}
        \label{fig:k_b}
    \end{subfigure}

    \caption{Performance with different compression values $k$.}
    \label{fig:k}
    \vspace{-.2in}
\end{figure}

Next, we investigate the effect of choosing the best $k$ to balance the performance as indicated in Remark~\ref{remark:2}. In this experiment, the training dataset is distributed across $M=20$ clients in an IID manner. We set $\tau=1$ SGD update, $N=10$ and $r=0.75$. The results in Fig.~\ref{fig:k_a} are obtained using $\bar{P}=1$, $\sigma_h^2=1$ and $\sigma_z^2=50$. Fig. \ref{fig:k_a} shows that using a relatively low value of $k=0.3$ achieves the best performance. This is mainly due to the channel conditions being severe: the noise variance is very high compared to the transmit power budget per device. As indicated in Remark~\ref{remark:2}, in this case the noise term of the convergence upper bound  dominates the convergence rate and hence decreasing the value of $k$ leads to better performance. In Fig. \ref{fig:k_b} we change the channel conditions and set $\bar{P}=10$, $\sigma_h^2=1$ and $\sigma_z^2=1$. Now we see that a relatively high value of $k=0.75$ performs best. This is consistent with Remark~\ref{remark:2}'s  indication that under good channel conditions the compression term in the convergence upper bound dominates performance, and hence increasing the value of $k$ is best.

\section{Conclusion}

This paper introduced an age-aware edge-blind over-the-air FL framework that is efficient and avoids the need for client-side CSI estimation. The PS aggregates the transmitted updates using multi-antenna MRC and updates only a small subset of coordinates through a two-step technique (AgeTop-$k$): selecting the $r$-largest model magnitudes, and then updating the $k$ most stale of them. By compressing each update into only \(k\) selected coordinates that fit into a single OFDM symbol, the proposed method reduces latency while keeping the most useful information. Our convergence analysis has shown that increasing the number of PS antennas improves convergence rate and accuracy. The analysis has also demonstrated a trade-off between compression quality and transmission quality: larger \(k\) reduces compression error but increases noise sensitivity, and hence an optimal value must be employed to balance both effects. Theoretical results have been validated by experimental results. Future work include analytically exploring the optimal value of \(k\) based on channel conditions and compression error, as well as extending to a fully-blind (CSI-free) system at both the PS and clients.



\begin{thebibliography}{99}

\bibitem{H. B. McMahan 2016} H. B. McMahan, E. Moore, D. Ramage, S. Hampson, and B. Aguera y Arcas. Communication-efficient learning of deep networks from decentralized data. In {\it Proc. AISTATS}, April 2017.


\bibitem{Federated learning: Strategies} J. Konečný, H. B. McMahan, F. X. Yu, P. Richtárik, A. T. Suresh, and D. Bacon. Federated learning: Strategies for improving communication efficiency. In {\it Proc. NIPS Wkshps}, 2016. 



\bibitem{privacy} K. A. Bonawitz, V. Ivanov, B. Kreuter, D. Ramage, A. Segal, A. Marcedone, H. B. McMahan, S. Patel, and K. Seth. Practical secure aggregation for privacy-preserving machine learning. In {\it Proc. ACM SIGSAC CCS}, October 2017.

\bibitem{advances and problems fl} P. Kairouz {\it et al.} Advances and Open Problems in Federated Learning. {\it Foundations \& Trends Machine Learn.}, 14(1--2):1--210, June 2021.



\bibitem{analog-digital}M. F. Ul Abrar and N. Michelusi. Analog-digital scheduling for federated learning: A communication-efficient approach. In {\it Proc. Asilomar}, October 2023.


\bibitem{delay sensitive} A. Ali and A. Arafa. Delay sensitive hierarchical federated learning with stochastic local updates. {\it IEEE Trans. Cogn. Commun. Netw.}, 11(5):3412--3424, October 2025.



\bibitem{ABS} H. H. Yang, A. Arafa, T. Q. S. Quek, and H. V. Poor. Age-based scheduling policy for federated learning in mobile edge networks. In {\it Proc. IEEE ICASSP}, May 2020.



\bibitem{value of information} M. A. Khan, H. H. Yang, Z. Chen, A. Iera, and N. Pappas. Value of information and timing-aware scheduling for federated learning. In {\it Proc. IEEE CSCN}, November 2023.

\bibitem{submodular} L. Ye and V. Gupta. Client scheduling for federated learning over wireless networks: A submodular optimization approach. In {\it Proc. IEEE CDC}, December 2021.


\bibitem{online scheduling} B. Xu, W. Xia, J. Zhang, T. Q. S. Quek, and H. Zhu. Online client scheduling for fast federated learning. {\it IEEE Wireless Commun. Lett.}, 10(7):1434--1438, July 2021.


\bibitem {two-phase deep} X. Chen et al. Toward dynamic resource allocation and client scheduling in hierarchical federated learning: A two-phase deep reinforcement learning approach. {\it IEEE Trans. Commun.}, 72(12):7798--7813, December 2024.


\bibitem{CSMAAFL} X. Ma, Q. Wang, H. Sun, R. Q. Hu, and Y. Qian. {CSMAAFL}: Client scheduling and model aggregation in asynchronous federated learning. In {\it Proc. IEEE ICC}, June 2024.


\bibitem{QSGD} D. Alistarh, D. Grubic, J. Z. Li, R. Tomioka, and M. Vojnovic. {QSGD}: Communication-efficient SGD via randomized quantization and encoding. In {\it Proc. NeurIPS}, December 2017.


\bibitem{UVeQFed} N. Shlezinger, M. Chen, Y. C. Eldar, H. V. Poor, and S. Cui. {UVeQFed}: Universal vector quantization for federated learning. {\it IEEE Trans. Signal Process.}, 69:500--514, December 2021.


\bibitem{quantization constraints} N. Shlezinger, M. Chen, Y. C. Eldar, H. V. Poor, and S. Cui. Federated learning with quantization constraints. In {\it Proc. IEEE ICASSP}, May 2020.

\bibitem{joint privacy quantization} N. Lang and N. Shlezinger. Joint privacy enhancement and quantization in federated learning. In {\it Proc. IEEE ISIT}, June 2022.

\bibitem{Hierarchical quantization} L. Liu, J. Zhang, S. Song, and K. B. Letaief. Hierarchical federated learning with quantization: Convergence analysis and system design. {\it IEEE Trans. Wireless Commun.}, 22(1):2--18, January 2023.


\bibitem{DSGD} M. M. Amiri and D. Gündüz. Machine learning at the wireless edge: Distributed stochastic gradient descent over-the-air. In {\it Proc. IEEE ISIT}, July 2019.


\bibitem {FedAQ} L. Qu, S. Song, and C.-Y. Tsui. Fed{AQ}: Communication-efficient federated edge learning via joint uplink and downlink adaptive quantization. arXiv:2406.18156.


\bibitem {OTA computation} K. Yang, T. Jiang, Y. Shi, and Z. Ding. Federated learning via over-the-air computation. {\it IEEE Trans. Wireless Commun.}, 19(3):2022--2035, March 2020.


\bibitem{Beamforming and device selection OTA} F. M. Kalarde, M. Dong, B. Liang, Y. A. E. Ahmed, and H. T. Cheng. Beamforming and device selection design in federated learning with over-the-air aggregation. {\it IEEE Open J. Commun. Soc.}, 5:1710--1723, March 2024.


\bibitem{Beamforming Vector Design} M. Kim, A. L. Swindlehurst, and D. Park. Beamforming vector design and device selection in over-the-air federated learning. {\it IEEE Trans. Wireless Commun.}, 22(11):7464--7477, November 2023.


\bibitem {Active RIS} D. Zhang, M. Xiao, M. Skoglund, and H. V. Poor. Federated learning via active {RIS} assisted over-the-air computation. In {\it Proc. IEEE ICMLCN}, May 2024.


\bibitem{Knowledge-guided OTA} Y. Zou, Z. Wang, X. Chen, H. Zhou, and Y. Zhou. Knowledge-guided learning for transceiver design in over-the-air federated learning. {\it IEEE Trans. Wireless Commun.}, 22(1):270--285, January 2023.


\bibitem{Age-based device selection} J. Liu, Z. Chang, and Y.-C. Liang. Age-based device selection and transmit power optimization in over-the-air federated learning. arXiv:2501.01828.


\bibitem{on analog} T. Sery and K. Cohen. On analog gradient descent learning over multiple access fading channels. {\it IEEE Trans. Signal Process.}, 68:2897--2911, April 2020.


\bibitem{Broadband analog} G. Zhu, Y. Wang, and K. Huang. Broadband analog aggregation for low-latency federated edge learning. {\it IEEE Trans. Wireless Commun.}, 19(1):491--506, January 2020.


\bibitem{NCAirFL} H. Wen, N. Michelusi, O. Simeone, and H. Xing. {NCAirFL: CSI}-free over-the-air federated learning based on non-coherent detection. arXiv:2411.13000.


\bibitem{Revisiting analog} H. H. Yang, Z. Chen, T. Q. S. Quek, and H. V. Poor. Revisiting analog over-the-air machine learning: The blessing and curse of interference. {\it IEEE J. Sel. Top. Signal Process.}, 16(3):406--419, April 2022.

\bibitem{Random Orthogonalization} X. Wei, C. Shen, J. Yang, and H. V. Poor. Random orthogonalization for federated learning in massive {MIMO} systems. {\it IEEE Trans. Wireless Commun.}, 23(3):2469--2485, March 2024.


\bibitem{Distributed SGD} J. Choi. Communication-efficient distributed {SGD} using random access for over-the-air computation. {\it IEEE J. Sel. Areas Inf. Theory}, 3(2):206--216, June 2022.


\bibitem{CS 1} A. Edin and Z. Chen. Over-the-air federated learning with compressed sensing: Is sparsification necessary? In {\it Proc. IEEE ICMLCN}, May 2024.


\bibitem{CS 2} Y.-S. Jeon, M. M. Amiri, and N. Lee. Communication-efficient federated learning over {MIMO} multiple access channels. {\it IEEE Trans. Commun.}, 70(10):6547--6562, October 2022.

\bibitem{rtopk} L. P. Barnes, H. A. Inan, B. Isik, and A. Özgür. r{T}op-k: A statistical estimation approach to distributed {SGD}. {\it IEEE J. Sel. Areas Inf. Theory}, 1(3):897--907, November 2020.



\bibitem{Layer} J. Oh, D. Lee, D. Won, W. Noh, and S. Cho. Communication-efficient federated learning over-the-air with sparse one-bit quantization. {\it IEEE Trans. Wireless Commun.}, 23(10):15673--15689, October 2024.


\bibitem{SegOTA} C. Zhang, M. Dong, B. Liang, A. Afana, and Y. A. Ahmed. {SegOTA}: Accelerating over-the-air federated learning with segmented transmission. arXiv:2504.09745.


\bibitem{blind} M. M. Amiri, T. M. Duman, D. Gündüz, S. R. Kulkarni, and H. V. Poor. Blind federated edge learning. {\it IEEE Trans. Wireless Commun.}, 20(8):5129--5143, August 2021.


\bibitem{age aware} R. Du, Z. Li, and H. H. Yang. Age-aware partial gradient update strategy for federated learning over the air. arXiv:2504.01357.


\bibitem{ragek} M. Mortaheb, P. Kaswan, and S. Ulukus. rAge-k: Communication-efficient federated learning using age factor. In {\it Proc. Asilomar}, October 2024.


\end{thebibliography}
\end{document}